\newtheorem{theorem}{Theorem}[section]
\newtheorem{lemma}[theorem]{Lemma}
\theoremstyle{definition}
\newtheorem{defn}[theorem]{Definition}
\def \szip {\textsc{Szip}}
\newcommand{\cL}{\mathcal{L}}
\newcommand{\bE}{\mathbb{E}}
\def\@fnsymbol#1{\ensuremath{\ifcase#1\or \dagger\or \ddagger\or
		\mathsection\or \mathparagraph\or \|\or **\or \dagger\dagger
		\or \ddagger\ddagger \else\@ctrerr\fi}}
\begin{document}
\onehalfspace

\title{Structural Entropy of the Stochastic Block Models} 
\author{Jie Han \thanks{J. Han, T. Guo, W. Han, B. Bai, and G. Zhang are with Theory Lab, Central Research Institute, 2012 Labs, Huawei Tech. Co., Ltd.; (han.jie@huawei.com, guo.tao1@huawei.com, harvey.hanwei@huawei.com, baibo8@huawei.com, nicholas.zhang@huawei.com)} , 
	Tao Guo $^{\dagger}$, 
	Qiaoqiao Zhou \thanks{Q. Zhou is with the Department of Computer Science, School of Computing, National University of Singapore; (zhouqq@comp.nus.edu.sg)} , 
	Wei Han $^{\dagger}$, Bo Bai $^{\dagger}$, and Gong Zhang $^{\dagger}$
}

\date{}
\maketitle

\begin{abstract}
	With the rapid expansion of graphs and networks and the growing magnitude of data from all areas of science, effective treatment and compression schemes of context-dependent data is extremely desirable.
	A particularly interesting direction is to compress the data while keeping the ``structural information'' only and ignoring the concrete labelings.
	Under this direction, Choi and Szpankowski introduced the structures (unlabeled graphs) which allowed them to compute the structural entropy of the Erd\H{o}s--R\'enyi random graph model.
	Moreover, they also provided an asymptotically optimal compression algorithm that (asymptotically) achieves this entropy limit and runs in expectation in linear time.
	
	In this paper, we consider the Stochastic Block Models with an arbitrary number of parts.
	Indeed, we define a partitioned structural entropy for Stochastic Block Models, which generalizes the structural entropy for unlabeled graphs and encodes the partition information as well.
	We then compute the partitioned structural entropy of the Stochastic Block Models, and provide a compression scheme that asymptotically achieves this entropy limit.
\end{abstract}


\section{Introduction}
Shannon's metric of ``Entropy" of information is a foundational concept of information theory~\cite{Raymond-book,Thomas-Cover-book}.
Given a discrete random variable $X$ with support set (that is, the possible outcomes) $x_1, x_2, \dots, x_n$, which occurs with probability $p_1, p_2, \dots, p_n$, the entropy of $X$ is defined as
\[
H(X):= - \sum_{i=1}^m p_i \log p_i,
\]
where the logarithm here and throughout this paper is of base 2.
Note that the entropy of $X$ is a function of the probability distribution of $X$.

The entropy was originally created by Shannon in~\cite{Shannon-48} as part of his theory of communication, where a data communication system consists of a data source $X$, a channel and a receiver.
The fundamental problem of communication is for the receiver to reliably recover what data was generated by the source, based on the bits it receives through the channel.
Shannon proved that the entropy of the source $X$ plays a central role -- in his source coding theorem it is shown that the entropy is the mathematical limit on how well the data can be losslessly compressed.

The question then arises: \emph{How to compress data that has structures, e.g., data in social networks?}
In Shannon's 1953 less known paper~\cite{Shannon-53} he argued for an extension of information theory, where data is considered as observations of a source, to ``non-conventional data'' (that is, lattices).
Indeed, nowadays data appears in various formats and structures (e.g., sequences, expressions, interactions) and in drastically increasing amounts.
In many scenarios, data is highly context-dependent and in particular, the structural information and the context information seem to be two conceptually different aspects.
Therefore it is desirable to develop novel theory and efficient algorithms for extracting useful information from non-conventional data structures.
Roughly speaking, such data consists of structural information, which, might be understood as the ``shape'' of the data, and context information which should be recognized as data labels.

It is well-known that complex networks (e.g., social networks) admit community structures~\cite{palla2005uncovering}. That is, users within a group interact with each other more frequently than those outside the group. The Stochastic Block Model (SBM) \cite{SBM1983} is a celebrated random graph model that has been widely used to study the community structures in graphs and networks. It provides a good benchmark to evaluate the performance of community detection algorithms and inspires the design of many algorithms for community detection tasks. The theoretical underpinnings of the SBM have been extensively studied and sharp thresholds for exact recovery have been successively established~\cite{abbe2015award,mossel2015,abbe2015multiple-community,hajek2017hidden}. We refer readers to~\cite{abbe2017survey} for a recent survey, where other interesting and important problems in SBM are also discussed.  

In addition to the SBM model discussed in \cite{abbe2017survey}, there are other angles to study compression of data with graph structures. 
Asadi et. al. \cite{survey-02} investigated data compression on graphs with clusters. 
Zenil et. al. \cite{survey-01} have surveyed information-theoretic methods, in particular Shannon entropy and algorithmic complexity, for characterizing graphs and networks. 

\subsection{Compression of graphs}
In recent years, graphical data and the network structures supporting them are becoming increasingly common and important in branches of engineering and sciences.
To better represent and transmit graphical data, many works consider the problem of compressing the (random) graph up to isomorphism, i.e., compressing the structure of a graph.  
A graph $G$ contains a finite set $V$ of vertices and a set $E$ of edges each of which connects two vertices.
A graph can be represented by a binary matrix (the adjacency matrix) that further can be viewed as a binary sequence.
Thus, encoding a labeled graph (that is, all vertices need to be distinguished) is equivalent to encoding the $\binom{|V|}2$-digit binary sequence, given certain probability distribution on all $\binom{|V|}2$ possible edges.
However, such a string does not reflect internal symmetries that are conveyed by the graph automorphism, and sometimes we are only interested in the local or global structures in the graph, rather than the exact vertex labelings.
The structural entropy is defined when the graphs are considered unlabeled, or simply called structures, where the vertices are viewed as undistinguishable.
The goal of this natural definition is to capture the information of the structure, and thus provides a fundamental measure in graph/structure compression schemes.

The problem actually has a strong theoretical background.
Back to 1984, Tur\'an~\cite{Turan-84} raised the question of finding an efficient coding method for general unlabeled graphs on $n$ vertices, where a lower bound of $\binom n2 - n\log n + O(n)$ bits is suggested.
This lower bound can be seen by the number of unlabeled graphs~\cite{Harary-73}.
The question was later answered by Naor~\cite{Naor-90} in 1990 who proposed such a representation that is optimal up to the first two leading terms when all unlabeled graphs are equally likely.
In a recent paper Kieffer et al.~\cite{Kieffer-09} proved a structural complexity of a binary tree.
There also have been some heuristic methods for real-world graph compression schemes, see~\cite{Adler-01, Chierichetti-09, Peshkin-07,Savari-04, Sun-07}.
Rather recently, Choi and Szpankowski~\cite{Choi-Wojciech-12} studied the structural entropy of the Erd\H{o}s--R\'enyi random graph $\mathcal G(n, p)$.
They computed the structural entropy given that $p$ is not (very) close to 0 or 1 and also gave a compression scheme that matches their computation.
Later, the structural entropy for other randomly generated graphs, e.g. the preferential attachment graphs and web graphs are also studied~\cite{Luczak-19-conf,Luczak-19-journal,Sauerhoff-16,Kontoyiannis-21}.

However, it is well-known that the  Erd\H{o}s--R\'enyi model is too simplistic to model real networks, in particular due to its strong homogeneity and absence of community structure.  
In this paper, we consider the compression of graphical structures of the SBM, which in general model real networks better and circumvent the issues of the ER-model.  
In summary, our contributions are as follows:
\begin{itemize}
	\item We introduce the partitioned structural entropy which generalizes the structural entropy for unlabeled graphs and we show that it reflects the partition information of the SBM.
	\item We provide an explicit formula for the partitioned structural entropy of the SBM. 
	\item We also propose a compression scheme that asymptotically achieves this entropy limit.
\end{itemize}

Semantic communications are considered as a key component of future generation networks, where a natural problem to consider is how to efficiently extract and transmit the ``semantic information''.
In the case of graph data, one may view the (partitioned) structures as the information that needs to be abstracted while the concrete labeling information is considered redundant.
From this point of view, our result is a step for the study of semantic compression/communication under appropriate contexts.

\subsection{Related works}
Finally, we would like to point out that there are some other information metrics defined on graphs. 
The term ``graph entropy'' has been defined and used in the history.
For example, graph entropy introduced by K\H{o}rner in \cite{korner1973} denotes the number of bits one has to convey to resolve the ambiguity of a vertex in a graph. This notion also turns out to be useful in other areas, including combinatorics.  Chromatic entropy introduced in \cite{alon1996} is the lowest entropy of any coloring of a graph. It finds application in zero-error source coding. 
We remark that the structural entropy we considered is quite different from the K\H{o}rner graph entropy and chromatic entropy. 

On the other hand, a concept of graph entropy (also called \emph{topological information content of a graph}) was introduced by Rashevsky~\cite{Rashevsky-55} and Trucco~\cite{Trucco-56}, and later by Mowshowitz~\cite{Mowshowitz-68-1,Mowshowitz-68-2,Mowshowitz-68-3,Mowshowitz-68-4,Mowshowitz-12,Dehmer-11}, which is defined as a function of (the structure of) a graph and an equivalence relation defined on its vertices or edges.
Such a concept is a measure of the graph itself and does not involve any probability distribution.

\section{Preliminaries}
\subsection{Structural entropy of unlabeled graphs}
Now let us formally define the structural entropy given a probability distribution on unlabeled graphs.

Given an integer $n$, define $\mathcal G_n$ as the collection of all $n$-vertex labeled graphs.
\begin{defn}[Entropy of Random Graph]
	Given an integer $n$ and a probability distribution on $\mathcal G_n$, the entropy of a random graph $\mathcal G\in\mathcal G_n$ is defined as
	\[
	H_{\mathcal G} = \mathbb E[-\log P(G)] = - \sum_{G\in \mathcal G_n} P(G) \log P(G)
	\]
	where $P(G)\triangleq P(\mathcal{G}=G)$ is the probability of a graph $G$ in $\mathcal G_n$.
\end{defn}
Then the random structure model $\mathcal S_n$ associated with the probability distribution $\mathcal G_n$, is defined as the unlabeled version of $\mathcal G_n$.
For a given $S\in \mathcal S_n$, the probability of $S$ can be computed as
\[
P(S)=\sum_{G\cong S, G\in \mathcal G_n}P(G).
\]
Here $G\cong S$ means that $G$ and $S$ have the same structure, that is, $S$ is isomorphic to $G$.
Clearly if all isomorphic labeled graphs have the same probability, then for any labeled graph $G\cong S$, one has
\[
P(S)=N(S)\cdot P(G)
\]
where $N(S)$ stands for the number of different labeled graphs that have the same structure as~$S$.
\begin{defn}[Structural Entropy]
	The structural entropy $H_{\mathcal S}$ of a random graph $\mathcal G$ is defined as the entropy of a random structure $\mathcal S$ associated with $\mathcal G_n$, that is,
	\[
	H_{\mathcal S}=\mathbb E[-\log P(S)] = - \sum_{S\in \mathcal S} P(S) \log P(S)
	\]
	where the sum is over all distinct structures.
\end{defn}

The Erd\H{o}s--R\'enyi random graph $\mathcal G(n, p)$, also called the binomial random graph, is a fundamental random graph model, which has $n$ vertices and each pair of vertices is connected with probability $p$, independent of other pairs.
In 2012, Choi and Szpankowski~\cite{Choi-Wojciech-12} proved the following for the Erd\H{o}s--R\'enyi random graphs.

\begin{theorem}
	[Choi and Szpankowski,~\cite{Choi-Wojciech-12}]
	\label{thm:CSz12}
	For large $n$ and all $p$ satisfying $n^{-1}\ln n\ll p$ and $1-p\gg n^{-1}\ln n$, the following holds:
	\begin{enumerate}
		\item The structural entropy $H_{\mathcal S}$ of $\mathcal G(n,p)$ is
		\[
		H_{\mathcal S} = \binom n2 h(p) - \log n! + O\left( \frac{\log n}{n^{\alpha}} \right)
		\]
		for some $\alpha>0$.
		\item For a structure $S$ of $n$ vertices and $\varepsilon >0$
		\[
		P\left( \left| -\frac{1}{\binom n2}\log P(S) - h(p) + \frac{\log n!}{\binom n2} \right| < \varepsilon \right) > 1-2\varepsilon
		\]
		where $h(p)=-p\log p - (1-p)\log (1-p)$ is the entropy rate of a binary memoryless source.
	\end{enumerate}
\end{theorem}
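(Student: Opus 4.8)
The plan is to express $H_{\mathcal S}$ as the entropy $H_{\mathcal G}$ of the labeled model plus a correction term equal to the expected log-size of the automorphism group, and then to show that this correction is negligible because $\mathcal G(n,p)$ is asymmetric with overwhelming probability in the stated density window. The first step is an exact identity. In $\mathcal G(n,p)$ the probability of a labeled graph depends only on its number of edges, so isomorphic labeled graphs are equiprobable, the relation $P(S)=N(S)\,P(G)$ holds for every labeled $G\cong S$, and $N(S)=n!/|\mathrm{Aut}(S)|$ by orbit--stabilizer. Fixing a labeled representative $G_S$ of each structure $S$ and putting $q_S:=P(G_S)$,
\[
H_{\mathcal S}=-\sum_S P(S)\log P(S)=-\sum_S P(S)\log q_S-\sum_S P(S)\log N(S).
\]
Since $P(S)=N(S)q_S$, the first sum re-expands to $-\sum_{G\in\mathcal G_n}P(G)\log P(G)=H_{\mathcal G}$, and $H_{\mathcal G}=\binom n2 h(p)$ because $\mathcal G(n,p)$ is a string of $\binom n2$ i.i.d.\ Bernoulli$(p)$ bits; the second sum equals $\log n!-\mathbb E[\log|\mathrm{Aut}(\mathcal G)|]$. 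Hence
\[
H_{\mathcal S}=\binom n2 h(p)-\log n!+\mathbb E\bigl[\log|\mathrm{Aut}(\mathcal G)|\bigr],
\]
and part~(1) reduces to showing $\mathbb E[\log|\mathrm{Aut}(\mathcal G)|]=O(\log n/n^{\alpha})$.

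Since $\log|\mathrm{Aut}(\mathcal G)|\le\log n!=O(n\log n)$ deterministically, it suffices to bound $\Pr[\mathcal G(n,p)\text{ has a non-trivial automorphism}]$, which I would do by a first-moment argument over non-identity permutations $\sigma$, grouped by the number $m$ of vertices they move. For $\sigma$ to be an automorphism the edge indicator must be constant on every $\langle\sigma\rangle$-orbit of vertex-pairs; an orbit of size $\ell$ is consistent with probability $p^{\ell}+(1-p)^{\ell}\le q^{\ell-1}$ where $q:=\max\{p,1-p\}$, so, writing $F(\sigma)$ for the number of pairs $\sigma$ does not fix,
\[
\Pr[\sigma\in\mathrm{Aut}(\mathcal G)]\le\prod_{\text{nontrivial orbits }O}\bigl(p^{|O|}+(1-p)^{|O|}\bigr)\le q^{F(\sigma)/2}.
\]
A short count shows $F(\sigma)\ge\tfrac13 mn$ for every $\sigma$ moving $m$ vertices, and the number of such $\sigma$ is at most $n^{m}$, so $\Pr[\mathcal G\text{ not asymmetric}]\le\sum_{m\ge2}n^{m}q^{mn/6}=\sum_{m\ge2}\bigl(n\,q^{n/6}\bigr)^{m}$. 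Because $\min\{p,1-p\}\cdot n\gg\ln n$ throughout the stated range, $n\,q^{n/6}=\exp\!\bigl(\ln n-\Omega(\min\{p,1-p\}\,n)\bigr)\to0$ faster than any polynomial, so this geometric series is $n^{-\omega(1)}$; multiplying by the $O(n\log n)$ bound on $\log|\mathrm{Aut}|$ gives $\mathbb E[\log|\mathrm{Aut}(\mathcal G)|]=n^{-\omega(1)}=O(\log n/n^{\alpha})$ for any fixed $\alpha>0$.

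For part~(2), for a random structure $S$ with $k=k(S)$ edges write
\[
-\frac{1}{\binom n2}\log P(S)=\Bigl[-\tfrac{k}{\binom n2}\log p-\bigl(1-\tfrac{k}{\binom n2}\bigr)\log(1-p)\Bigr]+\frac{\log|\mathrm{Aut}(S)|-\log n!}{\binom n2}.
\]
The bracket is a Lipschitz function of $k/\binom n2$, and $k\sim\mathrm{Bin}\!\bigl(\binom n2,p\bigr)$, so Hoeffding's inequality puts $k/\binom n2$ within a suitable $\delta$ of $p$ except with probability $2e^{-2\delta^{2}\binom n2}$, on which event the bracket lies within $\varepsilon/2$ of $h(p)$. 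By the previous paragraph $\log|\mathrm{Aut}(S)|=0$ outside an event of probability $n^{-\omega(1)}$, so there the last fraction equals $-\log n!/\binom n2$. Taking $n$ large and a union bound over these two bad events yields $P\!\left(\bigl|-\tfrac{1}{\binom n2}\log P(S)-h(p)+\tfrac{\log n!}{\binom n2}\bigr|<\varepsilon\right)>1-2\varepsilon$.

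The only genuinely delicate point is the symmetry estimate of the second paragraph: one must verify the bound $F(\sigma)=\Omega(mn)$ uniformly (permutations that are products of many short cycles, or a single long cycle, each have to be checked), and one must confirm that near the threshold $p\asymp n^{-1}\ln n$ --- where $q^{n/6}$ is only slightly below $1$ --- the factor $n$ from counting permutations of a given support size is still beaten, which is exactly what $\min\{p,1-p\}\cdot n\gg\ln n$ guarantees; the regime $p\to1$ is handled symmetrically to $p\to0$ via the complement graph.
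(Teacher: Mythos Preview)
The paper does not prove this theorem: it is quoted from Choi--Szpankowski and used as a black box, so there is no proof in the paper to compare against directly. That said, your approach is correct and is exactly the template the paper follows for its own result, the SBM generalisation (Theorem~\ref{thm:SBM}): the identity $H_{\mathcal G}=H_{\mathcal S}+\sum_S P(S)\log N(S)$ with $N(S)=n!/|\mathrm{Aut}(S)|$, reduction of part~(1) to bounding $\mathbb E[\log|\mathrm{Aut}|]$ via the trivial cap $\log n!$ times the probability of symmetry, and for part~(2) a typical set built from AEP on the edge-indicator sequence together with the asymmetry event.

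The one substantive difference is the asymmetry input. The paper invokes the Kim--Sudakov--Vu lemma (Lemma~\ref{lem:KSV02}) as a black box, whereas you supply a direct first-moment argument over non-identity permutations. Your argument is sound: the orbit bound $p^{\ell}+(1-p)^{\ell}\le q^{\ell-1}$, the count $F(\sigma)\ge mn/3$ for a permutation moving $m$ vertices (checking both small and large $m$), and the geometric tail $\sum_{m\ge2}(n\,q^{n/6})^m=n^{-\omega(1)}$ under $\min\{p,1-p\}\gg n^{-1}\ln n$ all go through. This is more self-contained than citing Kim--Sudakov--Vu, at the cost of reproving a known fact; the paper's route is shorter but relies on an external reference.
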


Furthermore, they~\cite{Choi-Wojciech-12}  also presented a compression algorithm for unlabeled graphs that asymptotically
achieves the structural entropy up to an $O(n)$ error term.
%

\subsection{Stochastic Block Model -- Our result}
It is well-known that the ER model is too simplistic to model real networks, in particular due to its strong homogeneity and absence of community structure.
The Stochastic Block Model is then introduced on the assumption that vertices in a network connect independently but with probability based on their profiles, or equivalently, on their community assignment.
For example, in the SBM with two communities and symmetric parameters, also known as the planted bisection model, denoted by $\mathcal G(n,p,q)$, the vertex set is partitioned into two sets $V_1$ and $V_2$, any pair of vertices inside $V_1$ or $V_2$ are connected with probability $p$ and any pair of vertices across the clusters are connected with probability $q$, and all these connections are independent.

As an illuminating example, consider a context $G$ where there are $n/2$ users and $n/2$ devices, and each pair of users and each pair of devices are connected with probability $p$, a user and a device is connected with probability $q$ and each of these connections is independent of all other connections.
Suppose that we need to compress the information of $G$.
However, in the context it is not appropriate to view $G$ as an unlabeled graph, that is, in addition to the structure information, it is also important to keep the ``community'' information -- the compression also needs to encode the information that who is a user and who is device.


\begin{defn}[Partition-respecting isomorphism, Partitioned Unlabeled Graphs]
	Let $r\le n$ be integers.
	Suppose $V$ is a set of $n$ vertices and $\mathcal P=\{V_1, V_2, \dots, V_r\}$ is a partition of $V$ into $r$ parts.
	The partition-respecting isomorphism, denoted by ``$\cong_{\mathcal P}$'' is defined as follows.
	For any two labeled graphs $G$ and $G'$, we write $G\cong_{\mathcal P} G'$ if and only if $G\cong G'$ they are isomorphic via an isomorphism function $\phi: V\to V$ such that $\phi(V_i)=V_i$, for $1\le i\le r$.
	%
	Then $\Gamma_{\mathcal P}$ is defined as the collection of $n$-vertex graphs on $V$ where we ignore the labels of vertices inside each $V_i$, $1\le i\le r$, namely, the equivalence classes under partition-respecting isomorphism, with respect to $\mathcal P$.
	
\end{defn}

Note that every labeled graph $G$ corresponds to a unique structure $S\in \Gamma_\mathcal P$, and we use $G\cong_{\mathcal P} S$ to denote this relation.
Furthermore, under the above definition, general unlabeled graphs correspond to the case $r=1$.

\begin{defn}[Partitioned Structural Entropy]
	Let $V$ be a set of $n$ vertices where $n\in \mathbb N$.
	Suppose $\mathcal P=\{V_1, V_2, \dots, V_r\}$ is a partition of $V$ into $r$ parts and $\mathcal S_n$ is a probability distribution over all partitioned unlabeled graphs on $n$ vertices.
	Then the structural entropy $H_{\mathcal S}$ associated to $\mathcal S_n$ is defined by
	\[
	H_{\mathcal S}=\mathbb E[-\log P(S)] = - \sum_{S\in {\mathcal S_n}} P(S) \log P(S).
	\]

\end{defn}

In this paper we extend Theorem~\ref{thm:CSz12} to the structural entropy of the Stochastic Block Model with any given number of blocks, and provide a compression algorithm that asymptotically matches this structural entropy.
We first give the result for the balanced bipartition case $\mathcal G(n,p,q)$.

\begin{theorem}\label{thm:SBM}
Let $V=V_1\cup V_2$ be a set of $n$ vertices and $|V_1|=|V_2|=n/2$.
Suppose $\mathcal G(n,p, q)$ is a probability distribution of graphs on $V$ where every edge inside $V_1$ or $V_2$ is present with probability $p$ and every edge between $V_1$ and $V_2$ is present with probability $q$, and these edges are mutually independent.
	For large even $n$ and all $p$ satisfying $n^{-1}\ln n\ll p, q$ and $1-p\gg n^{-1}\ln n$, the following holds:
	\begin{enumerate}
		\item[(i)] The partitioned structural entropy $H_{\mathcal S}$ of $\mathcal G(n,p, q)$ is
		\begin{equation}
		H_{\mathcal S} = 2\binom {n/2}2 h(p) + \frac{n^2}{4}h(q) - 2\log \left(\frac n2\right)! + O\left( \frac{\log n}{n^{\alpha}} \right)  \label{stru-entropy-value}
		\end{equation}
		for some $\alpha>0$.
		\item[(ii)] For a balanced bipartitioned structure $S$ and $\varepsilon >0$
		\[
		P\left( \left| -\frac{1}{\binom n2}\log P(S) - \frac{n-2}{2n-2} h(p) - \frac{n}{2n-2}h(q) + \frac{2\log (n/2)!}{\binom n2} \right| < 3\varepsilon \right) > 1 - 4\varepsilon
		\]
		where $h(p)=-p\log p - (1-p)\log (1-p)$ is the entropy rate of a binary memoryless source.
	\end{enumerate}
\end{theorem}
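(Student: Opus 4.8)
The plan is to follow the route of Choi and Szpankowski for the Erd\H{o}s--R\'enyi case (Theorem~\ref{thm:CSz12}): reduce both parts to a single per-graph identity obtained from an orbit--stabilizer count, so that the only genuinely new ingredient is control of the expected size of the group of partition-respecting automorphisms. Fix $V=V_1\cup V_2$ with $|V_1|=|V_2|=n/2$, let $\mathcal{H}:=\mathrm{Sym}(V_1)\times\mathrm{Sym}(V_2)$ act on labeled graphs on $V$, and for a labeled graph $G$ write $\mathrm{Aut}_{\mathcal{P}}(G):=\{\phi\in\mathcal{H}:\phi(G)=G\}$ and let $e_{\mathrm{in}}(G),e_{\mathrm{cross}}(G)$ be the numbers of edges inside the parts and across. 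A partition-respecting isomorphism preserves the splitting of pairs into ``internal'' and ``cross'', so $G\cong_{\mathcal{P}}G'$ forces $e_{\mathrm{in}}$ and $e_{\mathrm{cross}}$ to agree and hence $P(G)=P(G')$ under $\mathcal{G}(n,p,q)$; consequently $P(S)=N_{\mathcal{P}}(S)\,P(G)$ for any representative $G\cong_{\mathcal{P}}S$, and since the $\mathcal{H}$-orbit of $G$ is exactly $\{G':G'\cong_{\mathcal{P}}G\}$, orbit--stabilizer gives $N_{\mathcal{P}}(S)=\big((n/2)!\big)^2/|\mathrm{Aut}_{\mathcal{P}}(G)|$. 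Therefore
\[
-\log P(S) \;=\; -\log P(G) \;-\; 2\log (n/2)! \;+\; \log|\mathrm{Aut}_{\mathcal{P}}(G)|,
\]
and taking expectations over $G\sim\mathcal{G}(n,p,q)$ yields $H_{\mathcal{S}}=H_{\mathcal{G}}-2\log (n/2)!+\mathbb{E}\big[\log|\mathrm{Aut}_{\mathcal{P}}(G)|\big]$.

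For part (i), independence of the $\binom n2$ edge indicators gives $H_{\mathcal{G}}=2\binom{n/2}{2}h(p)+\tfrac{n^2}{4}h(q)$ exactly, so it remains to prove the key lemma $0\le\mathbb{E}\big[\log|\mathrm{Aut}_{\mathcal{P}}(G)|\big]=O(\log n/n^{\alpha})$ for some $\alpha>0$. I would obtain this by noting that restriction to the two parts is an injective homomorphism $\mathrm{Aut}_{\mathcal{P}}(G)\hookrightarrow\mathrm{Aut}(G[V_1])\times\mathrm{Aut}(G[V_2])$, so $\log|\mathrm{Aut}_{\mathcal{P}}(G)|\le\log|\mathrm{Aut}(G[V_1])|+\log|\mathrm{Aut}(G[V_2])|$; since $G[V_1]$ and $G[V_2]$ are each distributed as $\mathcal{G}(n/2,p)$, and the hypotheses $n^{-1}\ln n\ll p$, $1-p\gg n^{-1}\ln n$ pass to $m=n/2$, the estimate $\mathbb{E}\big[\log|\mathrm{Aut}(\mathcal{G}(m,p))|\big]=O(\log m/m^{\alpha})$---which follows by combining Theorem~\ref{thm:CSz12} with the $r=1$ instance of the identity above---finishes part (i). (One can also prove the lemma directly, without the Erd\H{o}s--R\'enyi result, by bounding $\mathbb{E}\big[\log|\mathrm{Aut}_{\mathcal{P}}(G)|\big]\le\sum_{\mathrm{id}\ne\phi\in\mathcal{H}}P(\phi\in\mathrm{Aut}_{\mathcal{P}}(G))$ and summing over $\phi$ grouped by the number $k\ge2$ of moved vertices, using that such a $\phi$ respects $\Omega(kn)$ internal pairs lying in orbits of size $\ge2$ and that an orbit of size $m$ contributes a factor $\le\max(p,1-p)^{m-1}$; this uses only the hypotheses on $p$ and $1-p$.)

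For part (ii) I would divide the per-graph identity by $\binom n2$, use the elementary identities $2\binom{n/2}{2}/\binom n2=\tfrac{n-2}{2n-2}$ and $(n^2/4)/\binom n2=\tfrac{n}{2n-2}$ (so they sum to $1$), and set $X_{\mathrm{in}}=e_{\mathrm{in}}(G)/(2\binom{n/2}{2})$, $X_{\mathrm{cross}}=e_{\mathrm{cross}}(G)/(n^2/4)$, which have means $p$ and $q$. Expanding $-\log P(G)$ and collecting terms gives
\begin{align*}
&-\frac{1}{\binom n2}\log P(S) - \frac{n-2}{2n-2}h(p) - \frac{n}{2n-2}h(q) + \frac{2\log (n/2)!}{\binom n2} \\
&\qquad = \frac{\log|\mathrm{Aut}_{\mathcal{P}}(G)|}{\binom n2} + \frac{n-2}{2n-2}(X_{\mathrm{in}}-p)\log\frac{1-p}{p} + \frac{n}{2n-2}(X_{\mathrm{cross}}-q)\log\frac{1-q}{q}.
\end{align*}
The first term on the right lies deterministically in $\big[0,\,2\log (n/2)!/\binom n2\big]=[0,O(\log n/n)]$, hence is $<\varepsilon$ for $n$ large. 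For the other two I would apply Chebyshev's inequality: $\mathrm{Var}(X_{\mathrm{in}})=p(1-p)/(2\binom{n/2}{2})=O(n^{-2})$, and the hypotheses force $|\log\tfrac{1-p}{p}|=O(\log n)$, so $P\big(|X_{\mathrm{in}}-p|\cdot|\log\tfrac{1-p}{p}|\ge\varepsilon\big)=O(\varepsilon^{-2}\log^2 n/n^2)\le 2\varepsilon$ for $n$ large, and likewise for the cross term---this is exactly where the conditions $q\gg n^{-1}\ln n$ and $1-q\gg n^{-1}\ln n$ are used, to make the cross-term concentrate and to keep $|\log\tfrac{1-q}{q}|=O(\log n)$. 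A union bound over these two bad events then shows that with probability $>1-4\varepsilon$ every term on the right is $<\varepsilon$ in absolute value, so the left-hand side is $<3\varepsilon$, proving (ii).

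The step I expect to be the main obstacle is the automorphism lemma $\mathbb{E}\big[\log|\mathrm{Aut}_{\mathcal{P}}(G)|\big]=O(\log n/n^{\alpha})$. It is not enough that $\mathcal{G}(n,p,q)$ is asymmetric (has trivial $\mathrm{Aut}_{\mathcal{P}}$) with high probability: the rare symmetric graphs can have $|\mathrm{Aut}_{\mathcal{P}}(G)|$ as large as $\big((n/2)!\big)^2$, so one must show that $P(\phi\in\mathrm{Aut}_{\mathcal{P}}(G))$ decays fast enough in the support of $\phi$ to beat the number of candidate permutations, and one must handle $p$ near the connectivity thresholds $n^{-1}\ln n$ and $1-n^{-1}\ln n$, where $\max(p,1-p)$ is close to $1$. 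Reducing to the Erd\H{o}s--R\'enyi statement via $\mathrm{Aut}_{\mathcal{P}}(G)\hookrightarrow\mathrm{Aut}(G[V_1])\times\mathrm{Aut}(G[V_2])$ is the cleanest way to confine this difficulty to the already-settled $p$-regime (in particular no condition on $q$ is needed for part (i)); with that lemma in hand, everything else is routine concentration and bookkeeping.
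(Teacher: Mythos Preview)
Your proposal is correct and follows essentially the same route as the paper: compute $H_{\mathcal G}$ exactly, relate $H_{\mathcal S}$ to $H_{\mathcal G}$ via an orbit count, and control the automorphism term by reducing to the asymmetry of the two Erd\H{o}s--R\'enyi parts $G[V_1],G[V_2]$ (the paper invokes the Kim--Sudakov--Vu lemma directly, you invoke it through Theorem~\ref{thm:CSz12}). Two cosmetic differences worth noting: your orbit--stabilizer identity $N_{\mathcal P}(S)=\big((n/2)!\big)^2/|\mathrm{Aut}_{\mathcal P}(G)|$ is the correct one, whereas the paper writes $N(S)=\big((n/2)!\big)^2/(|\mathrm{Aut}(S_1)||\mathrm{Aut}(S_2)|)$, which is in general only a lower bound for $N(S)$ --- the argument still goes through because only the upper bound $|\mathrm{Aut}_{\mathcal P}(G)|\le|\mathrm{Aut}(S_1)||\mathrm{Aut}(S_2)|$ is actually used, and that is exactly your injection; and for part~(ii) you dispense with the asymmetry event entirely by using the deterministic bound $\log|\mathrm{Aut}_{\mathcal P}(G)|/\binom n2\le 2\log(n/2)!/\binom n2=o(1)$, which is a slight streamlining of the paper's typical-set argument.
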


Note that the structural entropy $H_{\mathcal S}$ here is larger than that in Theorem~\ref{thm:CSz12} (even if $p=q$), which reflects the fact that the SBM with ``a planted (bi-)partition'' contains \emph{prefixed} structures, so has less symmetries than $\mathcal G(n,p)$, the pure random model\footnote{For $\mathcal G(n,p)$, when it is asymmetric, comparing with the completely labeled graphs, Theorem~\ref{thm:CSz12} saves a term as $\log n!$; this saving becomes $2\log \left( n/2\right)!$ for the planted balanced bipartition case in Theorem~\ref{thm:SBM}.}.

\section{Proof of Theorem~\ref{thm:SBM}}

One key ingredient in the proof of Theorem~\ref{thm:CSz12} in~\cite{Choi-Wojciech-12} is the following lemma on the symmetry of $\mathcal G(n,p)$.
A graph is called asymmetric if its automorphism group does not contain any permutation other than identity;
otherwise it is called symmetric.

\begin{lemma}
	[Kim, Sudakov and Vu, 2002]\label{lem:KSV02}
	For all $p$ satisfying $n^{-1}\ln n\ll p$ and $1-p\gg n^{-1}\ln n$, a random graph $G\in \mathcal G(n,p)$ is symmetric with probability $O(n^{-w})$ for any positive constant $w$.
\end{lemma}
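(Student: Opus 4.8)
The plan is a first-moment argument: bound $P(\mathcal G(n,p)\text{ is symmetric})\leq\sum_{\sigma\neq\mathrm{id}}P(\sigma\in\mathrm{Aut}(G))$ and control each term with a carefully \emph{weighted} estimate. Fix $\sigma\neq\mathrm{id}$ and let $\langle\sigma\rangle$ act on the $\binom n2$ vertex pairs; since the pairs of $G\in\mathcal G(n,p)$ are independent Bernoulli$(p)$, and $\sigma\in\mathrm{Aut}(G)$ holds exactly when $G$ is constant on every $\langle\sigma\rangle$-orbit of pairs, we get $P(\sigma\in\mathrm{Aut}(G))=\prod_{O}\big(p^{|O|}+(1-p)^{|O|}\big)$, the product over orbits $O$. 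Singleton orbits give a factor $1$, and for $|O|\geq 2$ I would use the superadditivity bound $x^{m}+y^{m}\leq(x+y)^{m}$ (for $m\geq 1$, $x,y\geq 0$) with $x=p^{2}$, $y=(1-p)^{2}$, $m=|O|/2$, yielding $p^{|O|}+(1-p)^{|O|}\leq\rho^{\,|O|/2}$ where $\rho:=p^{2}+(1-p)^{2}=1-2p(1-p)<1$. Hence $P(\sigma\in\mathrm{Aut}(G))\leq\rho^{\,M(\sigma)/2}$, where $M(\sigma)$ is the number of pairs lying in $\langle\sigma\rangle$-orbits of size $\geq 2$.

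Next I would lower-bound $M(\sigma)$ by the number $k$ of vertices moved by $\sigma$. A pair $\{u,v\}$ is a singleton orbit exactly when $u,v$ are both $\sigma$-fixed or form a $2$-cycle of $\sigma$; there are $\binom{n-k}{2}$ of the first kind and at most $k/2$ of the second, so
\[
M(\sigma)\;\geq\;\binom n2-\binom{n-k}{2}-\tfrac k2\;=\;\tfrac{k(2n-k-2)}{2}\;\geq\;\tfrac{kn}{4}\qquad(2\leq k\leq n,\ n\text{ large}),
\]
and therefore $P(\sigma\in\mathrm{Aut}(G))\leq\rho^{\,kn/8}$. As there are at most $n^{k}$ permutations moving exactly $k$ vertices, the union bound gives $P(\mathcal G(n,p)\text{ is symmetric})\leq\sum_{k=2}^{n}n^{k}\rho^{\,kn/8}\leq\sum_{k\geq 2}\big(n\,\rho^{\,n/8}\big)^{k}$.

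To finish, I would feed in the density hypothesis. Since $n^{-1}\ln n\ll p$ and $n^{-1}\ln n\ll 1-p$, we have $p(1-p)\geq\tfrac12\min(p,1-p)\gg n^{-1}\ln n$, hence $\rho^{\,n/8}\leq e^{-np(1-p)/4}=n^{-\omega(1)}$ and $n\,\rho^{\,n/8}=n^{1-\omega(1)}\to 0$. For $n$ large this is $\leq 1/2$, so the geometric tail is at most $2\big(n\,\rho^{\,n/8}\big)^{2}=n^{-\omega(1)}$; writing $p(1-p)\geq g(n)\,n^{-1}\ln n$ with $g(n)\to\infty$, this is $2\,n^{2-g(n)/2}\leq 2n^{-w}$ once $g(n)\geq 2(w+2)$, which yields $O(n^{-w})$ for every fixed $w>0$.

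The one step I expect to need genuine care — and the one that makes the crude union bound over all $\sim n!$ permutations actually converge — is the superadditivity estimate $p^{|O|}+(1-p)^{|O|}\leq\rho^{\,|O|/2}$; replacing it by the weaker $\leq\rho$ would leave $P(\sigma\in\mathrm{Aut}(G))\leq\rho^{\,O(n)}$ for permutations built from long cycles, far too weak. The other thing to verify is that the lower bound $M(\sigma)\geq\tfrac{k(2n-k-2)}{2}$ stays $\gg kn/\log n$ across the whole range $2\leq k\leq n$, so that every cycle type is suppressed at once; beyond that it is routine estimation.
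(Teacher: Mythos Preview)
The paper does not prove this lemma at all: it is quoted from Kim, Sudakov and Vu (2002) and used as a black box in the proof of Theorem~\ref{thm:SBM}. So there is no ``paper's own proof'' to compare against.

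That said, your argument is a correct, self-contained proof. The key steps all check: the orbit factorisation $P(\sigma\in\mathrm{Aut}(G))=\prod_O\big(p^{|O|}+(1-p)^{|O|}\big)$ is exactly right; the inequality $p^{|O|}+(1-p)^{|O|}\le\rho^{|O|/2}$ follows cleanly from $a^m+b^m\le 1$ for $a+b=1$, $m\ge 1$ (take $a=p^2/\rho$, $b=(1-p)^2/\rho$, $m=|O|/2$); the count $M(\sigma)\ge k(2n-k-2)/2\ge kn/4$ is accurate since singleton pair-orbits are precisely pairs of fixed points or $2$-cycles of $\sigma$; and the hypothesis $p(1-p)\gg n^{-1}\ln n$ indeed forces $n\rho^{n/8}=n^{1-\omega(1)}$, so the geometric tail is $n^{-\omega(1)}=O(n^{-w})$ for every fixed $w$. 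Your closing remark is also on point: the weighting $\rho^{|O|/2}$ rather than $\rho$ per non-trivial orbit is exactly what makes the union bound over all $n!$ permutations summable.

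This is essentially the classical first-moment route to asymmetry of $\mathcal G(n,p)$ (in the spirit of Erd\H{o}s--R\'enyi 1963); the cited Kim--Sudakov--Vu paper sharpens the threshold and obtains finer structural information, but in the range $n^{-1}\ln n\ll p,1-p$ stated here your elementary argument already delivers the $O(n^{-w})$ bound the present paper needs.
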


\begin{proof}[Proof of Theorem~\ref{thm:SBM}]
	Note that every pair of vertices in $V_1$ or in $V_2$ should be considered as undistinguishable, but not the pairs of vertices in $X\times Y$.
	Recall that we write $G\cong_{\mathcal P} S$ for a graph $G$ and a structure $S$ if $S$ represents the structure of $G$ (with respect to the partition $\mathcal P$).
	
	Let $\mathcal G:=\mathcal G(n,p,q)$.
	We first compute $H_{\mathcal G}$.
	Note that there are $\binom n2$ possible edges in $G\in \mathcal G$, and we can view it as a binary sequence of length $\binom n2$, where each digit is a Bernoulli random variable.
	Moreover, for edges inside $V_1$ or $V_2$, the random variable, denoted by $X_1$, has expectation $p$ and for edges in $V_1\times V_2$ the random variable, denoted by $X_2$, has expectation $q$. Thus we have
	\begin{align*}
	H_{\mathcal G} & = - \mathbb E [\log X_1^{2\binom {n/2}2} X_2^{n^2/4}] \\
	& = - 2\binom {n/2}2  \mathbb E [\log X_1] - \frac{n^2}{4}  \mathbb E [\log X_2] \\
	& = 2\binom {n/2}2 h(p) + \frac{n^2}{4} h(q).
	\end{align*}
	
	Now write $\mathcal S_n$ for the probability distribution on $V$ over all partitioned unlabeled graphs inherited from $\mathcal G$, namely, given $S\in \Gamma_\mathcal P$, $P(S)=\sum_{G\cong_{\mathcal P} S} P(G)$.
	Let $H_{\mathcal S}$ be the partitioned structural entropy of $\mathcal S_n$.
	Therefore, compared with our goal, it remains to show that
	\begin{equation}\label{eq:SG}
	H_{\mathcal S} - H_{\mathcal G} = - 2\log \left(n/2\right)! + O\left( \frac{\log n}{n^{\alpha}} \right).
	\end{equation}
	
	Note that in $\mathcal G(n,p, q)$, all labeled graphs $G\in \mathcal G$ such that $G\cong_{\mathcal P} S$ have the same probability $P(G)$.
	Thus, given a (labeled) graph $G\in \mathcal G$, we have $P(G)= P(S)/ N(S)$, where $S\in \mathcal S_n$ is such that $G\cong_{\mathcal P} S$.
	So the graph entropy of $\mathcal G=\mathcal G(n,p,q)$ can be written as
	\begin{align}
	H_{\mathcal G} &=  - \sum_{G\in \mathcal G} P(G) \log P(G) \nonumber \\
	&=  - \sum_{S\in \mathcal S_n} \sum_{G\cong_{\mathcal P} S, G\in \mathcal G} P(G) \log P(G)\nonumber \\
	&=  - \sum_{S\in \mathcal S_n} \sum_{G\cong_{\mathcal P} S, G\in \mathcal G} \frac{P(S)}{N(S)} \log  \frac{P(S)}{N(S)} \nonumber \\
	&=  - \sum_{S\in \mathcal S_n} P(S) \log  \frac{P(S)}{N(S)} \nonumber \\
	&=  H_{\mathcal S} + \sum_{S\in \mathcal S} P(S) \log N(S) \label{eq:GS}
	\end{align}
	Define $S[W]$ be be $S$ restricted on $W$ for $W\in V$. 
	Now we split $S$ into $S_1$ and $S_2$, i.e., $S_1=S[V_1]$ and $S_2=S[V_2]$.
	Write $Aut(S_i)$ for the automorphism group for $S_i$, and we naturally have
	\[
	N(S) = \frac{(n/2)!\cdot (n/2)!}{|Aut(S_1)||Aut(S_2)|}.
	\]
	Combining this with~\eqref{eq:SG} and~\eqref{eq:GS}, it remains to show that
	\[
	\sum_{S\in \mathcal S} P(S) \log|Aut(S_1)||Aut(S_2)| = O\left( \frac{\log n}{n^{\alpha}} \right).
	\]
	In the summation above we only need to focus on $S$ such that either $S_1$ or $S_2$ is symmetric, as otherwise $\log|Aut(S_1)||Aut(S_2)|=\log 1=0$.
	By Lemma~\ref{lem:KSV02}, we conclude that the probability of $S$ restricted on $V_1$ or $V_2$ is symmetric is $O(n^{-1-\alpha})$ for some $\alpha>0$, and for such $S$ we use the trivial bound $\log|Aut(S_1)||Aut(S_2)|\le 2\log (n/2)!\le 2n\log n$.
	This gives us the desired estimate in (i)
	\[
	\sum_{S\in \mathcal S} P(S) \log|Aut(S_1)||Aut(S_2)| \le 2n\log n \cdot O(n^{-1-\alpha}) = O\left( \frac{\log n}{n^{\alpha}} \right).
	\]
	
	To show (ii), for a set $V$ of $n$ vertices and a balanced bipartition $\mathcal P=(V_1, V_2)$ of $V$, we define the typical set $T_{\varepsilon}^n$ as the set of structures $S$ on $n$ vertices satisfying (a) $S$ is asymmetric on $V_1$ and $V_2$, respectively; (b) for $G\cong_{\mathcal P} S$
	\[
	2^{-2\binom{n/2}2 h(p) - \tfrac{n^2}{4} h(q) - \binom n2 \varepsilon} \le P(G) \le 2^{-2\binom{n/2}2 h(p) - \tfrac{n^2}{4} h(q) + \binom n2 \varepsilon}.
	\]
	Denote by $T_{1}^n$ and $T_{2}^n$ the sets of structures satisfying the properties (a) and (b), respectively and thus we have $T_{\varepsilon}^n=T_1^n\cap T_2^n$.
	Firstly, by the asymmetry of $\mathcal G(n,p)$ (Lemma~\ref{lem:KSV02}), we conclude that $P(T_1^n) > 1-2\varepsilon$ for large $n$.
	Secondly, we use a binary sequence of length $\binom n2$ to represent a (labeled) instance $G$ of $\mathcal G(n,p, q)$, where the first $\binom{n/2}2$ bits $\cL_1$ represent the induced subgraph on $V_1$, the next $\binom{n/2}2$ bits $\cL_2$ represent the induced subgraph on $V_2$, and finally the rest $n^2/4$ bits $\cL_{12}$ represent the bipartite graph on $V_1\times V_2$.
	Since all edges of $G$ are generated independently, both $\cL_1$ and $\cL_2$ have in expectation $\binom{n/2}2 p$ 1's and the AEP property of the binary sequences implies that
	\[
	2^{-\binom{n/2}2 h(p) - \binom n2 \varepsilon} \le P(G[V_1]), P(G[V_2]) \le 2^{-\binom{n/2}2 h(p) + \binom n2 \varepsilon}
	\]
	holds with probability at least $1-2\varepsilon$.
	Similarly, $\cL_{12}$ has in expectation $(n^2/4)q$ 1's and the AEP property of the binary sequences gives that with probability at least $1-\varepsilon$,
	\[
	2^{- \tfrac{n^2}{4} h(q) - \binom n2 \varepsilon} \le P(G[V_1, V_2]) \le 2^{- \tfrac{n^2}{4} h(q) + \binom n2 \varepsilon}
	\]
	Since these edges are independent, we finally conclude that (b) holds with probability at least $1-3\varepsilon$.
	Thus, $P(T_{\varepsilon}^n)\ge 1-4\varepsilon$.
	Now we can compute $P(S)$ for $S\in T_{\varepsilon}^n$.
	By (a), $P(S)=(n/2)!(n/2)!P(G)$ for any $G\cong S$.
	Together with (b) and straightforward computation, the assertion of (ii) follows.
\end{proof}

\section{SBM Compression Algorithm}
Given the computation of the structural entropy, a natural next step is to design efficient compression schemes that are close to or even (asymptotically) achieve this entropy limit.
Choi and Szpankowski~\cite{Choi-Wojciech-12} presented such an algorithm (which they named \szip) for (unlabeled) random graphs, which uses in expectation at most $\binom n2 h(p) - n\log n + O(n)$ bits and asymptotically achieves the structural entropy given in Theorem~\ref{thm:CSz12}.
Roughly speaking, \szip~greedily peels off vertices from the graph and (efficiently) store the neighborhood information.
This procedure can be simply reversed but the labeling of the recovered graph may be different from the original graph, which is the reason on why a saving of the codeword length is achieved. 
Refinements and analysis~\cite{Choi-Wojciech-12} are also provided to achieve the proposed performance.

Here we give an algorithm that optimally compresses SBMs which uses the \szip~algorithm as building blocks and matches the structural entropy computation in Theorem~\ref{thm:SBM}.
The algorithm consists of two stages. 
It first compresses $S[V_1]$ and $S[V_2]$ using \szip~and then compresses $S[V_1,V_2]$ using an arithmetic compression algorithm with the help of \szip~decoding outputs.
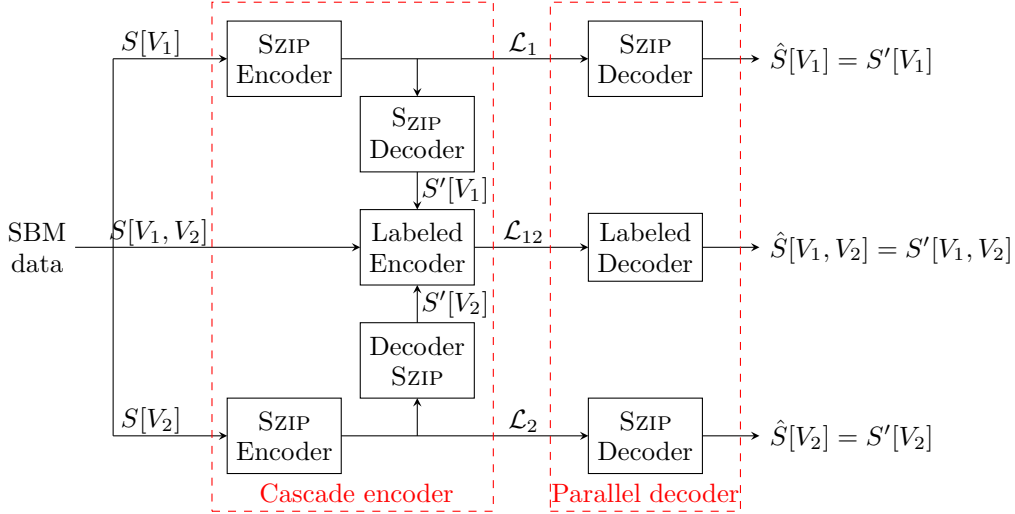
\begin{figure}[t]
	\centering
	\begin{tikzpicture}[scale=1.0, font=\small]
	\node at (0,0.2) {SBM};
	\node at (0,-0.2) {data};
	\draw (0.5,0)--(1.0,0);
	\draw (1.0,2.5)--(1.0,-2.5);
	
	\node at (1.5,2.7) {$S[V_1]$};
	\draw[->,>=stealth] (1.0,2.5)--(2.5,2.5);
	\node at (3.25,2.7) {\szip};
	\node at (3.25,2.3) {Encoder };
	\draw (2.5,2.0) rectangle (4.0,3.0);
	\node at (6.4,2.7) {$\cL_1$};
	\draw[->,>=stealth] (4.0,2.5)--(7.25,2.5);
	
	\node at (1.5,-2.3) {$S[V_2]$};
	\draw[->,>=stealth] (1.0,-2.5)--(2.5,-2.5);
	\node at (3.25,-2.3) {\szip};
	\node at (3.25,-2.7) {Encoder };
	\draw (2.5,-2.0) rectangle (4.0,-3.0);
	\node at (6.4,-2.3) {$\cL_2$};
	\draw[->,>=stealth] (4.0,-2.5)--(7.25,-2.5);
	
	\node at (1.6,0.2) {$S[V_1,V_2]$};
	\draw[->,>=stealth] (1.0,0)--(4.25,0);
	
	\draw[->,>=stealth] (5.0,2.5)--(5.0,2.0);
	\node at (5.0,1.7) {$\text{S}_{\text{ZIP}}$};
	\node at (5.0,1.3) {Decoder };
	\draw (4.25,1.0) rectangle (5.75,2.0);
	
	\draw[->,>=stealth] (5.0,-2.5)--(5.0,-2.0);
	\node at (5.0,-1.7) {\szip};
	\node at (5.0,-1.3) {Decoder };
	\draw (4.25,-1.0) rectangle (5.75,-2.0);
	
	\node at (5.5,0.75) {$S'[V_1]$};
	\draw[->,>=stealth] (5.0,1.0)--(5.0,0.5);
	\node at (5.5,-0.75) {$S'[V_2]$};
	\draw[->,>=stealth] (5.0,-1.0)--(5.0,-0.5);
	\node at (5.0,0.2) {Labeled };
	\node at (5.0,-0.2) {Encoder };
	\draw (4.25,-0.5) rectangle (5.75,0.5);
	\node at (6.4,0.2) {$\cL_{12}$};
	\draw[->,>=stealth] (5.75,0)--(7.25,0);
	
	\draw [dashed,red](2.3,-3.5) rectangle (6.0,3.25);
	\node [red] at (4.2,-3.25) {Cascade encoder};
	
	\node at (8.0,0.2) {Labeled };
	\node at (8.0,-0.2) {Decoder };
	\draw (7.25,-0.45) rectangle (8.75,0.45);
	\draw[->,>=stealth] (8.75,0)--(9.5,0);
	\node [right] at (9.5,0) {$\hat{S}[V_1,V_2]=S'[V_1,V_2]$};
	
	\node at (8.0,2.7) {\szip};
	\node at (8.0,2.3) {Decoder };
	\draw (7.25,2.0) rectangle (8.75,3.0);
	\draw[->,>=stealth] (8.75,2.5)--(9.5,2.5);
	\node [right] at (9.5,2.5) {$\hat{S}[V_1]=S'[V_1]$};
	
	\node at (8.0,-2.3) {\szip};
	\node at (8.0,-2.7) {Decoder };
	\draw (7.25,-2.0) rectangle (8.75,-3.0);
	\draw[->,>=stealth] (8.75,-2.5)--(9.5,-2.5);
	\node [right] at (9.5,-2.5) {$\hat{S}[V_2]=S'[V_2]$};
	
	\draw [dashed,red](6.75,-3.5) rectangle (9.25,3.25);
	\node [red] at (8.0,-3.25) {Parallel decoder};
	
	\end{tikzpicture}
	\caption{Illustration of compression algorithm}
	\label{fig_compression}
\end{figure}

To give a brief description of the compression algorithm, we again use the balanced bipartition $V_1\cup V_2$ as an example.
The encoding and decoding procedure of the algorithm is illustrated in Figure~\ref{fig_compression}. 
The algorithm encodes the observed $\mathcal S(n,p,q)$ into a binary string as follows.
It uses \szip~as a subroutine to compress $S[V_1]$ and $S[V_2]$ into binary sequences $\cL_1$ and $\cL_2$. 
Then, as part of the encoder, we run the \szip~decoder on $\cL_1$ and $\cL_2$ to obtain decoded structures $S'[V_1]$ and $S'[V_2]$, respectively. 
We then compress $S[V_1, V_2]$ as a labeled bipartite graph under the vertex labeling of $S'[V_1]$ and $S'[V_2]$ into $\cL_{12}$. 
This ``Labeled Encoder" can be done by treating it as a binary sequence of length $n^2/4$ and compressing using a standard arithmetic encoder \cite{Arithmetic-01,Arithmetic-02,Arithmetic-1}. 
The concatenation of \szip~algorithms and the arithmetic encoder forms the cascade encoder of our algorithm and obtains the codeword $(\cL_1,\cL_2,\cL_{12})$. 
Upon receiving the codeword, we decode them parallelly using \szip~decoder and the arithmetic decoder. 
This completes our algorithm. 

The main challenge in the design of our algorithm is how the decoder can retrieve the consistency between the bipartite graph $S[V_1, V_2]$ and the decoded version of $S[V_1]$ and $S[V_2]$.
A key observation here is that since \szip~is a deterministic algorithm, 
although it may permute the vertex labelings, its output is an invariant given the same input. 
Given this, our solution here is to first run \szip~(both encoding and decoding) at the encoder, 
and obtain structures $S'[V_1]$ and $S'[V_2]$, respectively.
We then compress $S[V_1, V_2]$ (as a labeled bipartite graph) under the vertex labeling of $S'[V_1]$ and $S'[V_2]$.
This would guarantee that the decoded structures $\hat{S}[V_1]$, $\hat{S}[V_2]$ and $\hat{S}[V_1, V_2]$ share the same vertex labeling as $S'[V_1]$ and $S'[V_2]$, namely, $S$ is recovered.

Before discussing the performance of the algorithm, we first describe some useful properties of the arithmetic compression algorithm in the following lemma. 
We omit the proof of the lemma, which follows from the analysis in \cite{Arithmetic-01,Arithmetic-02,Arithmetic-1} and AEP properties in \cite{Raymond-book,Thomas-Cover-book}. 
\begin{lemma}\label{lemma-arithmetic}
	Let $L$ be the codeword length of the arithmetic compression algorithm when compressing a binary sequence with length $m$ and entropy rate $h$. 
	For large $m$, the following holds:
	\begin{enumerate}[label=$(\roman*)$]
		\item The expected codeword length asymptotically achieves the entropy of the message, i.e., 
		\begin{equation}
		\bE[L]= mh + O(\log m). \label{arithmetic-length}
		\end{equation}
		
		\item For any $\epsilon>0$,
		\begin{equation}
		P(|L-\bE[L]|\leq \epsilon \log m)\geq 1-o(1).
		\end{equation}
		
		\item The arithmetic algorithm runs in time $O(m)$. 
	\end{enumerate}
\end{lemma}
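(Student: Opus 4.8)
The plan is to reduce all three parts to standard facts about arithmetic coding of a binary memoryless source, taking the implementation details from \cite{Arithmetic-01,Arithmetic-02,Arithmetic-1} and the information-theoretic estimates from \cite{Raymond-book,Thomas-Cover-book}. Write $x^m=x_1\cdots x_m$ for the input sequence, modeled as i.i.d.\ $\mathrm{Bernoulli}(\theta)$ with $h(\theta)=h$, and set $P(x^m)=\prod_{i=1}^m\theta^{x_i}(1-\theta)^{1-x_i}$. The key structural fact is the \emph{length formula}: arithmetic coding maintains a nested subinterval of $[0,1)$ whose length, after processing $x^m$ under a probability model $\widehat P$, equals $\widehat P(x^m)$, and the encoder then outputs a dyadic point inside that interval using $\lceil\log(1/\widehat P(x^m))\rceil+O(1)$ bits, plus a further $O(\log m)$ bits to delimit the block length $m$ if it is not already known to the decoder. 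Hence, \emph{deterministically},
\begin{equation*}
L=-\log\widehat P(x^m)+O(\log m).
\end{equation*}
If the coder is run with the true parameter $\theta$ then $\widehat P=P$; if it is run adaptively (e.g.\ with a Krichevsky--Trofimov or Laplace estimator), the classical pointwise redundancy bound gives $-\log\widehat P(x^m)\le-\log P(x^m)+O(\log m)$ for every $x^m$. In either case $L=-\log P(x^m)+O(\log m)$ with a deterministic error term.

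To prove $(i)$ we take expectations: since $-\log P(x^m)=\sum_{i=1}^m\bigl(-\log(\theta^{x_i}(1-\theta)^{1-x_i})\bigr)$ is a sum of $m$ i.i.d.\ terms each of mean $h(\theta)=h$, we get $\bE[-\log P(x^m)]=mh$, and therefore $\bE[L]=mh+O(\log m)$, which is \eqref{arithmetic-length}. For $(ii)$, the length formula gives $L-\bE[L]=\bigl(-\log P(x^m)-mh\bigr)+O(\log m)$, so it is enough to show that $-\log P(x^m)$ concentrates about its mean $mh$; writing $Z_i:=-\log(\theta^{x_i}(1-\theta)^{1-x_i})$, the variables $Z_i$ are i.i.d.\ with mean $h$ and finite variance, so a Chebyshev bound---or a Bernstein bound when $\theta$ is bounded away from $0$ and $1$---controls $|{-\log P(x^m)}-mh|$ with probability $1-o(1)$, and combining this with the deterministic $O(\log m)$ gap yields the stated concentration of $L$ about $\bE[L]$. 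For $(iii)$, the finite-precision implementation of arithmetic coding from \cite{Arithmetic-01,Arithmetic-02,Arithmetic-1} processes each input bit with $O(1)$ amortized work---updating the fixed-precision interval endpoints and emitting/renormalizing a bounded number of output bits per symbol---so the encoder, and symmetrically the decoder, runs in time $O(m)$.

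The only genuinely technical point is the redundancy control underlying the length formula: one must either assume the decoder knows $\theta$, so that $\widehat P=P$ and the whole overhead is the $O(\log m)$ spent delimiting the block, or invoke the classical per-sequence $O(\log m)$ redundancy bound for an adaptive arithmetic coder and verify that the finite-precision renormalization adds only a further $O(1)$ per symbol. Once the length formula is in place, parts $(i)$--$(iii)$ are routine: linearity of expectation for $(i)$, the AEP/concentration for i.i.d.\ sums for $(ii)$, and a standard running-time audit of the encoder and decoder loops for $(iii)$.
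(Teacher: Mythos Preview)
The paper omits the proof entirely, saying only that it ``follows from the analysis in \cite{Arithmetic-01,Arithmetic-02,Arithmetic-1} and AEP properties in \cite{Raymond-book,Thomas-Cover-book}.'' Your sketch is precisely an expansion of that outline---the deterministic length formula from arithmetic coding, linearity of expectation for $(i)$, i.i.d.\ concentration for $(ii)$, and the standard per-symbol cost audit for $(iii)$---so on approach you match the paper. Parts $(i)$ and $(iii)$ are fine as written.

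There is, however, a real gap in your argument for part $(ii)$. You correctly reduce $L-\bE[L]$ to $\bigl(-\log P(x^m)-mh\bigr)+O(\log m)$ and then invoke Chebyshev or Bernstein on the i.i.d.\ sum $-\log P(x^m)=\sum_{i=1}^m Z_i$. But for any $\theta\notin\{0,\tfrac12,1\}$ the $Z_i$ have strictly positive variance, so $-\log P(x^m)-mh$ has standard deviation of exact order $\sqrt m$; no Chebyshev, Bernstein, or AEP bound can confine it to a window of width $\epsilon\log m=o(\sqrt m)$ with probability $1-o(1)$. Your concentration step therefore delivers $|L-\bE[L]|\le \epsilon m$ (weak law) or $|L-\bE[L]|\le c\sqrt m$ (Chebyshev) with high probability, but not the $\epsilon\log m$ window the lemma claims. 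Since the paper provides no argument here, there is nothing further to compare against; but you should not assert that ``combining this with the deterministic $O(\log m)$ gap yields the stated concentration''---that step fails, and the honest thing is to flag that the bound in $(ii)$ as stated does not follow from the AEP route (and, for generic $\theta$, cannot hold at the $\log m$ scale).
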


The following theorem characterizes the performance of our algorithm. 
It is immediate from Theorem 2 in \cite{Choi-Wojciech-12} (performance of \szip) and \Cref{lemma-arithmetic}, we omit the detailed proofs here. 
\begin{theorem}\label{thm-alg-performance}
	Let $V=V_1\cup V_2$ be a set of $n$ vertices and $|V_1|=|V_2|=n/2$.
	Given a partitioned unlabeled graph $S$ on $V$, let $L(S)$ be the codeword length given by our algorithm.
	For large $n$, our algorithm runs in time $O(n^2)$, and satisfies the following:
	\begin{enumerate}[label=$(\roman*)$]
		\item The algorithm asymptotically achieves the structural entropy in~\eqref{stru-entropy-value} \footnote{Note that $(n/2)\log (n/2) = n\log n + O(n)$.}, i.e., 
		\begin{equation*}
		\bE[L(S)]\le 2\binom{n/2}2 h(p) + \frac{n^2}4 h(q) - n\log n +O(n). \label{alg-length}
		\end{equation*}
		
		\item For any $\epsilon>0$,
		\begin{equation*}
		P(|L(S)-\bE[L(S)]|\leq \epsilon n\log n)\geq 1-o(1).
		\end{equation*}
		
	\end{enumerate}
\end{theorem}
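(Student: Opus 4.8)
The plan is to analyze the two stages of the cascade encoder separately and then add the contributions, using the performance guarantees of \szip{} (Theorem~2 of~\cite{Choi-Wojciech-12}) for the first stage and \Cref{lemma-arithmetic} for the second. First I would observe that the codeword decomposes as $L(S) = L_1 + L_2 + L_{12}$, where $L_i$ is the length of $\cL_i$ produced by running \szip{} on $S[V_i]$ (for $i=1,2$), and $L_{12}$ is the length of $\cL_{12}$ produced by the arithmetic encoder on the length-$n^2/4$ binary string representing $S[V_1,V_2]$ under the labeling fixed by the \szip{} decoder outputs $S'[V_1], S'[V_2]$. Since $S[V_1]$ and $S[V_2]$ are each distributed as the structure of an $\mathcal G(n/2, p)$ graph, Theorem~2 of~\cite{Choi-Wojciech-12} gives $\bE[L_i] \le \binom{n/2}{2} h(p) - (n/2)\log(n/2) + O(n)$ for each $i$; summing and using $(n/2)\log(n/2) = \tfrac12 n\log n + O(n)$ yields $\bE[L_1] + \bE[L_2] \le 2\binom{n/2}{2} h(p) - n\log n + O(n)$.

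Next I would handle $L_{12}$. The key point, already emphasized in the algorithm description, is that \szip{} is deterministic, so the relabeling it induces on $V_1$ and $V_2$ is a fixed (though data-dependent) permutation; relabeling the vertices does not change the joint law of the $n^2/4$ indicator variables for edges across $V_1\times V_2$, each of which is $\mathrm{Bernoulli}(q)$ and independent of everything relevant. Hence the string fed to the arithmetic encoder has entropy rate $h(q)$ and length $m = n^2/4$, and part~$(i)$ of \Cref{lemma-arithmetic} gives $\bE[L_{12}] = \tfrac{n^2}{4} h(q) + O(\log n)$. Adding this to the bound from the previous paragraph establishes part~$(i)$:
\[
\bE[L(S)] \le 2\binom{n/2}{2} h(p) + \frac{n^2}{4} h(q) - n\log n + O(n).
\]
For the running time, each \szip{} call runs in expected linear time in the number of potential edges, i.e.\ $O(n^2)$, and by \Cref{lemma-arithmetic}$(iii)$ the arithmetic stage runs in $O(n^2)$ as well, so the total is $O(n^2)$.

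For the concentration statement~$(ii)$, I would apply the high-probability guarantees componentwise: the tail bound for \szip{} from~\cite{Choi-Wojciech-12} gives $|L_i - \bE[L_i]| \le \epsilon' (n/2)\log(n/2)$ with probability $1-o(1)$ for $i=1,2$, and \Cref{lemma-arithmetic}$(ii)$ gives $|L_{12} - \bE[L_{12}]| \le \epsilon' \log(n^2/4)$ with probability $1-o(1)$. A union bound over the three events, together with the triangle inequality and the fact that $\log(n^2/4) = o(n\log n)$, yields $|L(S) - \bE[L(S)]| \le \epsilon n\log n$ with probability $1-o(1)$ after choosing $\epsilon'$ suitably in terms of $\epsilon$. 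The main obstacle is the justification in the second stage that conditioning on the \szip{} outputs does not disturb the distribution of the cross-edges: one must check that the permutation \szip{} applies to $V_1$ (resp.\ $V_2$) depends only on $S[V_1]$ (resp.\ $S[V_2]$), which is independent of the $V_1\times V_2$ edges, so that conditionally on $S'[V_1], S'[V_2]$ the cross-edge string is still an i.i.d.\ $\mathrm{Bernoulli}(q)$ sequence of length $n^2/4$ and \Cref{lemma-arithmetic} applies verbatim. Everything else is bookkeeping, so I would keep those calculations brief, which is consistent with the paper's stated intention to omit the detailed proof.
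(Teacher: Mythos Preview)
Your proposal is correct and follows precisely the approach the paper indicates: the paper states only that the result ``is immediate from Theorem~2 in~\cite{Choi-Wojciech-12} (performance of \szip) and \Cref{lemma-arithmetic}'' and omits the details, and your decomposition $L(S)=L_1+L_2+L_{12}$ with the two cited inputs applied componentwise is exactly that immediate argument. Your explicit justification that the \szip{}-induced relabeling depends only on $S[V_i]$ and hence leaves the cross-edge sequence i.i.d.\ $\mathrm{Bernoulli}(q)$ is the one nontrivial point, and you have identified and handled it correctly.
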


\section{General SBM with $r\ge 2$ blocks}
In previous sections, we discussed the structural entropy of SBM and the compression algorithm that asymptotically achieves this structural entropy for the balanced bipartition case ($r=2$). 
The corresponding results in \Cref{thm:SBM} and \Cref{thm-alg-performance} can be easily generalized to the general $r$-partition case.  
We briefly describe the generalizations below. 

\subsection{Structural entropy}
Our approach can deal with general SBMs similarly.
In a general SBM with $r\ge 2$ parts, an $r\times r$ symmetric matrix $P$ is used to describe the probabilities between and within the communities, where two vertices $u\in V_i$ and $v\in V_j$ are connected by an edge with probability $P_{ij}$ ($i$ and $j$ are not necessarily distinct).
To simplify the presentation, we only present the results below in its special form where $P_{ij}=p$ if $i=j$ and $P_{ij}=q$ if $i\neq j$, and we remark that similar results hold in the general case as well.
We first give the result on the computation of the partitioned structural entropy of SBM.
\begin{theorem}\label{thm:SBM-r}
	Fix $r$ reals $x_1, x_2, \dots, x_r$ in $(0,1)$ whose sum is 1.
	Let $V=V_1\cup V_2\cup \cdots \cup V_r$ be a set of $n$ vertices with a partition into $r$ parts such that $|V_i|=x_i n$.
	For large $n$ and all $p$ satisfying $n^{-1}\ln n\ll p, q$ and $1-p\gg n^{-1}\ln n$, the following holds:
	\begin{enumerate}
		\item[$(i)$] The $r$-partitioned structural entropy $H_{\mathcal S}^r$ for a partitioned structure $\mathcal S$ on $V$ is
		\begin{equation}
		H_{\mathcal S}^r = z (h(p)-h(q)) + \binom{n}{2} h(q) - r\log \left(\frac{n}{r}\right)! + O\left( \frac{\log n}{n^{\alpha}} \right)  \label{stru-entropy-r-value}
		\end{equation}
		for some $\alpha>0$, where $z:=\sum_{i=1}^r \binom {x_i n}{2}$.
		\item[$(ii)$] For a partitioned structure $S$ on $V$ and $\varepsilon >0$
		\[
		P\left( \left| -\frac{1}{\binom{n}{2}}\log P(S) - \frac{z}{\binom {n}{2}} (h(p)-h(q)) - h(q) + \frac{r\log (n/r)!}{\binom{n}{2}} \right| < 3\varepsilon \right) > 1 - 4\varepsilon.
		\]
	\end{enumerate}
\end{theorem}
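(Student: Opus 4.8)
The plan is to mimic the proof of \Cref{thm:SBM} almost verbatim, replacing the balanced bipartition by the $r$-part partition with $|V_i|=x_i n$. The whole argument reduces to three pieces: computing $H_{\mathcal G}$ for the labeled model, relating $H_{\mathcal S}^r$ to $H_{\mathcal G}$ via the overcounting factor $N(S)$, and bounding the contribution of symmetric pieces via \Cref{lem:KSV02}.

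First I would compute the labeled entropy $H_{\mathcal G}$. In the $r$-part SBM there are $z=\sum_{i=1}^r\binom{x_i n}{2}$ intra-community pairs, each an independent $\mathrm{Bernoulli}(p)$, and $\binom n2 - z$ inter-community pairs, each an independent $\mathrm{Bernoulli}(q)$. As in the excerpt, writing the adjacency vector as a product of Bernoulli variables and taking $-\mathbb E\log(\cdot)$ gives
\begin{equation*}
H_{\mathcal G} = z\,h(p) + \left(\binom n2 - z\right) h(q) = z\,(h(p)-h(q)) + \binom n2 h(q),
\end{equation*}
which already matches the first two terms of \eqref{stru-entropy-r-value}. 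Second, since all labeled graphs $G\cong_{\mathcal P} S$ have the same probability, the identical chain of equalities as in \eqref{eq:GS} yields $H_{\mathcal G} = H_{\mathcal S}^r + \sum_{S} P(S)\log N(S)$, where now
\begin{equation*}
N(S) = \prod_{i=1}^r \frac{(x_i n)!}{|\mathrm{Aut}(S[V_i])|}.
\end{equation*}
So $\log N(S) = \sum_{i=1}^r \log(x_i n)! - \sum_{i=1}^r \log|\mathrm{Aut}(S[V_i])|$. Since each $x_i n = n/r$ in the special case stated (the $x_i$ need not all be equal in general, but the theorem statement uses $n/r$, so I take $x_i=1/r$), the first sum contributes the deterministic $r\log(n/r)!$ term. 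For the general unequal case one simply keeps $\sum_i\log(x_i n)!$.

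Third, it remains to show $\sum_{S}P(S)\sum_{i=1}^r\log|\mathrm{Aut}(S[V_i])| = O(\log n / n^\alpha)$. The summand vanishes unless some $S[V_i]$ is symmetric; by a union bound over the $r$ parts and \Cref{lem:KSV02} (each $S[V_i]$ is distributed as $\mathcal G(x_i n, p)$, and $n^{-1}\ln n \ll p$ implies $(x_i n)^{-1}\ln(x_i n)\ll p$ for fixed $x_i$), the probability that some part is symmetric is $O(n^{-1-\alpha})$; on that event we use the trivial bound $\sum_i\log|\mathrm{Aut}(S[V_i])| \le r\log(n/r)! \le n\log n$, giving $O(n\log n)\cdot O(n^{-1-\alpha}) = O(\log n/n^\alpha)$. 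For part $(ii)$, I would define the typical set $T_\varepsilon^n$ by (a) every $S[V_i]$ is asymmetric, and (b) the AEP-type two-sided bound on $P(G)$ for $G\cong_{\mathcal P}S$, exactly as in the proof of \Cref{thm:SBM}; splitting the adjacency string into the $r$ intra-blocks and the inter-block part, applying the AEP to each and a union bound gives $P(T_\varepsilon^n)\ge 1-4\varepsilon$ (with constants absorbed), and on $T_\varepsilon^n$ we have $P(S)=\prod_i (x_i n)!\cdot P(G)$, so taking $-\tfrac{1}{\binom n2}\log$ yields the claimed concentration.

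The only mild obstacle is bookkeeping: keeping the $r$ error terms from \Cref{lem:KSV02} and the $r$ AEP applications under control so that the union bounds still give $O(n^{-1-\alpha})$ and $1-o(1)$ respectively. Since $r$ is a fixed constant, this is harmless, and no genuinely new idea beyond the $r=2$ case is needed; I would simply remark that the unequal-$x_i$ version replaces $r\log(n/r)!$ by $\sum_{i=1}^r\log(x_in)!$ throughout.
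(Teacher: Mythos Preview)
Your proposal is correct and is exactly the approach the paper intends: the paper does not give a separate proof of \Cref{thm:SBM-r} but simply states that the $r=2$ argument ``can be easily generalized,'' and your write-up is precisely that generalization---compute $H_{\mathcal G}$ from the $z$ intra-block and $\binom n2-z$ inter-block Bernoullis, use the identity \eqref{eq:GS} with $N(S)=\prod_i (x_in)!/\lvert\mathrm{Aut}_{\mathcal P}(S)\rvert$, and kill the automorphism term via a union bound over the $r$ parts with \Cref{lem:KSV02}. Your remark that the statement's $r\log(n/r)!$ tacitly assumes $x_i=1/r$ and should be $\sum_i\log(x_in)!$ in general is also well taken.
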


\subsection{Compression algorithm}
The compression algorithm for a general $r$ with vertex partition $\{V_1,V_2,\dots,V_r\}$ can be viewed as a union of the compression algorithms for $S[V_i]$ and $S[V_i, V_j]$ ($i< j\in\{1,2,\dots,r\}$).
To be more precise, we describe the algorithm as follows. 
It first compresses all $S[V_i]$ into $\cL_i$ using \szip. 
Then run the \szip~decoder with input $\cL_i$ to obtain the decoded structure $S'[V_i]$. 
With the indices of $S'[V_i]$, $i=1,2, \dots, r$, we can compress $S[V_1,V_2,\dots, V_r]$ as a labeled $r$-partite graph into $\cL$ using an arithmetic encoder. 
This completes the encoding procedure and gives the codewords $\cL_1, \dots, \cL_r, \cL$, for which we concatenate together and get the final codeword. 
The decoding is to simply run the \szip~decoders and labeled (arithmetic) decoders parallelly. 
The correctness of the decoding output can also be argued accordingly. 

The performance of the algorithm can be obtained similar to \Cref{thm-alg-performance} as follows. 
\begin{theorem}\label{thm-alg-performance-r}
	Fix $r$ reals $x_1, x_2, \dots, x_r$ in $(0,1)$ whose sum is 1.
	Let $V=V_1\cup V_2\cup \cdots \cup V_r$ be a set of $n$ vertices with a partition into $r$ parts such that $|V_i|=x_i n$.
	Given a partitioned unlabeled graph $S$ on $V$, let $L(S)$ be the codeword length given by our algorithm.
	For large $n$, our algorithm runs in time $O(n^2)$, and satisfies the following:
	\begin{enumerate}[label=$(\roman*)$]
		\item The algorithm asymptotically achieves the structural entropy in~\eqref{stru-entropy-r-value}, i.e., 
		\begin{equation*}
		\bE[L(S)]\le \sum_{i=1}^r \binom {x_i n}{2} (h(p)-h(q)) + \binom{n}{2} h(q) - n\log n + O(n). \label{alg-length-r}
		\end{equation*}
		
		\item For any $\epsilon>0$,
		\begin{equation*}
		P(|L(S)-\bE[L(S)]|\leq \epsilon n\log n)\geq 1-o(1).
		\end{equation*}
		
	\end{enumerate}
\end{theorem}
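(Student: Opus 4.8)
The plan is to deduce \Cref{thm-alg-performance-r} from the performance guarantee of \szip~(Theorem~2 of \cite{Choi-Wojciech-12}), applied separately to each part $V_i$, together with the arithmetic-coding guarantees of \Cref{lemma-arithmetic}, in exactly the way \Cref{thm-alg-performance} handles the case $r=2$. Correctness is identical to the $r=2$ argument: since \szip~is deterministic, running both its encoder and decoder at the sender produces fixed labelings of $S'[V_1],\dots,S'[V_r]$ that the receiver reproduces verbatim from $\cL_1,\dots,\cL_r$, so the labeled (arithmetic) encoding of $S[V_1,\dots,V_r]$ under those labelings is consistently decoded, and $S$ is recovered up to partition-respecting isomorphism.

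For the length bound, I would first estimate the \szip~stage. Each $S[V_i]$ is the unlabeled version of an instance of $\mathcal G(x_i n,p)$; as $x_i\in(0,1)$ is fixed, the hypotheses $n^{-1}\ln n\ll p$ and $1-p\gg n^{-1}\ln n$ pass to $V_i$, so Theorem~2 of \cite{Choi-Wojciech-12} gives that $\cL_i$ has expected length at most $\binom{x_i n}{2}h(p)-x_i n\log(x_i n)+O(x_i n)$ and is concentrated within $\epsilon' x_i n\log(x_i n)$ of its mean with probability $1-o(1)$. Summing over $i$ and using $\sum_i x_i=1$, we get $\sum_i x_i n\log(x_i n)=n\log n+n\sum_i x_i\log x_i=n\log n+O(n)$, where $\sum_i x_i\log x_i$ is a constant depending only on the fixed data $r,x_1,\dots,x_r$; hence $(\cL_1,\dots,\cL_r)$ has total expected length at most $z\,h(p)-n\log n+O(n)$.

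Next I would estimate the labeled stage. Under the labelings of $S'[V_1],\dots,S'[V_r]$ the graph $S[V_1,\dots,V_r]$ is a labeled $r$-partite graph on $m:=\sum_{i<j}|V_i||V_j|=\binom n2-z$ independent $\mathrm{Bernoulli}(q)$ slots, so the string handed to the arithmetic encoder has length $m$ and entropy rate $h(q)$. By \Cref{lemma-arithmetic}(i)--(ii), $\cL$ has expected length $\bigl(\binom n2-z\bigr)h(q)+O(\log m)$ and deviates from its mean by at most $\epsilon'\log m$ with probability $1-o(1)$. Adding the two stages yields $\bE[L(S)]\le z\bigl(h(p)-h(q)\bigr)+\binom n2 h(q)-n\log n+O(n)$, which is the bound in part~(i); note this matches the leading terms of \eqref{stru-entropy-r-value} since $r\log(n/r)!=n\log n+O(n)$ by Stirling. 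For part~(ii), take $\epsilon'=\epsilon/(2r)$ in the $r$ \szip~statements, observe that the arithmetic deviation $O(\log m)$ is $o(\epsilon n\log n)$, and apply a union bound over the $r+1$ stages (with $r$ fixed and each failing with probability $o(1)$) to conclude $|L(S)-\bE[L(S)]|\le\epsilon n\log n$ with probability $1-o(1)$. The running time is $\sum_i O((x_i n)^2)=O(n^2)$ for the $O(r)$ \szip~encoder and decoder passes and $O(m)=O(n^2)$ for the arithmetic passes by \Cref{lemma-arithmetic}(iii), giving $O(n^2)$ overall.

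The argument is largely routine assembly. The only points requiring care are the collapse of $\sum_i x_i n\log(x_i n)$ and of $r\log(n/r)!$ to $n\log n+O(n)$---harmless because the constants absorbed into the $O(n)$ depend only on the fixed parameters $r$ and $x_1,\dots,x_r$---and the bookkeeping in the union bound so that the $r+1$ separate concentration statements combine into a single deviation of order $\epsilon n\log n$. I do not expect a genuine obstacle beyond verifying these error-term manipulations.
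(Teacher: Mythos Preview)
Your proposal is correct and follows exactly the approach the paper intends: it assembles the bound from Theorem~2 of \cite{Choi-Wojciech-12} applied to each $S[V_i]$ together with \Cref{lemma-arithmetic} for the labeled $r$-partite graph, just as in the $r=2$ case of \Cref{thm-alg-performance}. In fact the paper omits the details entirely, so your write-up is more explicit than what appears there; the error-term collapses and the union bound you flag are indeed the only bookkeeping points, and you handle them correctly.
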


\section{Conclusion}

In this paper we defined the partitioned unlabeled graphs and partitioned structural entropy, which generalize the structural entropy for unlabeled graphs introduced by Choi and Szpankowski~\cite{Choi-Wojciech-12}.
We then computed the partitioned structural entropy for Stochastic Block Models and gave a compression algorithm that asymptotically achieves this structural entropy limit.
As mentioned earlier, we believe that in appropriate contexts the structural information of a graph or network can be interpreted as a kind of semantic information, in which case, the communication schemes may benefit from structural compressions which considerably reduce the cost.

\bibliographystyle{abbrv}
\bibliography{Ref_SBM}

\begin{thebibliography}{10}

\bibitem{abbe2017survey}
E.~Abbe.
\newblock Community detection and stochastic block models: recent developments.
\newblock {\em The Journal of Machine Learning Research}, 18(1):6446--6531,
  2017.

\bibitem{abbe2015award}
E.~Abbe, A.~S. Bandeira, and G.~Hall.
\newblock Exact recovery in the stochastic block model.
\newblock {\em IEEE Trans. Inf. Theory}, 62(1):471--487, 2015.

\bibitem{abbe2015multiple-community}
E.~Abbe and C.~Sandon.
\newblock Community detection in general stochastic block models: Fundamental
  limits and efficient algorithms for recovery.
\newblock In {\em 2015 IEEE 56th Annual Symposium on Foundations of Computer
  Science}, pages 670--688. IEEE, 2015.

\bibitem{Adler-01}
M.~Adler and M.~Mitzenmacher.
\newblock Towards compressing web graphs.
\newblock In {\em Proceedings DCC 2001. Data Compression Conference}, pages
  203--212, 2001.

\bibitem{alon1996}
N.~Alon and A.~Orlitsky.
\newblock Source coding and graph entropies.
\newblock {\em IEEE Trans. Inf. Theory}, 42(5):1329--1339, 1996.

\bibitem{survey-02}
A.~R. Asadi, E.~Abbe, and S.~Verdú.
\newblock Compressing data on graphs with clusters.
\newblock In {\em 2017 IEEE International Symposium on Information Theory
  (ISIT)}, pages 1583--1587, 2017.

\bibitem{Chierichetti-09}
F.~Chierichetti, R.~Kumar, S.~Lattanzi, M.~Mitzenmacher, A.~Panconesi, and
  P.~Raghavan.
\newblock On compressing social networks.
\newblock In {\em Proceedings of the 15th ACM SIGKDD International Conference
  on Knowledge Discovery and Data Mining}, ACM KDD '09, page 219–228, New
  York, NY, USA, 2009. Association for Computing Machinery.

\bibitem{Choi-Wojciech-12}
Y.~Choi and W.~Szpankowski.
\newblock Compression of graphical structures: Fundamental limits, algorithms,
  and experiments.
\newblock {\em IEEE Trans. Inf. Theory}, 58(2):620--638, 2012.

\bibitem{Thomas-Cover-book}
T.~M. Cover and J.~A. Thomas.
\newblock {\em Elements of Information Theory 2nd Edition}.
\newblock Wiley-Interscience, USA, 2006.

\bibitem{Dehmer-11}
M.~Dehmer and A.~Mowshowitz.
\newblock A history of graph entropy measures.
\newblock {\em Information Science}, 181(1):57–78, Jan. 2011.

\bibitem{hajek2017hidden}
B.~Hajek, Y.~Wu, and J.~Xu.
\newblock Information limits for recovering a hidden community.
\newblock {\em IEEE Trans. Inf. Theory}, 63(8):4729--4745, 2017.

\bibitem{Harary-73}
F.~Harary and E.~M. Palmer.
\newblock {\em Graphical Enumeration}.
\newblock New York: Academic, 1973.

\bibitem{SBM1983}
P.~W. Holland, K.~B. Laskey, and S.~Leinhardt.
\newblock Stochastic blockmodels: First steps.
\newblock {\em Social networks}, 5(2):109--137, 1983.

\bibitem{Kieffer-09}
J.~C. Kieffer, E.-H. Yang, and W.~Szpankowski.
\newblock Structural complexity of random binary trees.
\newblock In {\em Proceedings of 2009 IEEE International Symposium on
  Information Theory (ISIT)}, pages 635--639, Seoul, 2009.

\bibitem{Kontoyiannis-21}
I.~Kontoyiannis, Y.~H. Lim, K.~Papakonstantinopoulou, and W.~Szpankowski.
\newblock Symmetry and the entropy of small-world structures and graphs.
\newblock In {\em Proceedings of 2021 IEEE International Symposium on
  Information Theory (ISIT)}, pages 3026--3031, 2021.

\bibitem{korner1973}
J.~K{\"o}rner.
\newblock Coding of an information source having ambiguous alphabet and the
  entropy of graphs.
\newblock In {\em 6th Prague conference on information theory}, pages 411--425,
  1973.

\bibitem{Luczak-19-journal}
T.~\L{}uczak, A.~Magner, and W.~Szpankowski.
\newblock Asymmetry and structural information in preferential attachment
  graphs.
\newblock {\em Random Structures \& Algorithms}, 55:696--718, 03 2019.

\bibitem{Luczak-19-conf}
T.~\L{}uczak, A.~Magner, and W.~Szpankowski.
\newblock Compression of preferential attachment graphs.
\newblock In {\em Proceedings of 2019 IEEE International Symposium on
  Information Theory (ISIT)}, pages 1697--1701, 2019.

\bibitem{mossel2015}
E.~Mossel, J.~Neeman, and A.~Sly.
\newblock Consistency thresholds for the planted bisection model.
\newblock In {\em Proceedings of the forty-seventh annual ACM symposium on
  Theory of computing}, pages 69--75, 2015.

\bibitem{Mowshowitz-68-2}
A.~Mowshowitz.
\newblock Entropy and the complexity of graphs ii: the information content of
  digraphs and infinite graphs.
\newblock {\em Bulletin of Mathematical Biophysics 30}, pages 225--240, 1968.

\bibitem{Mowshowitz-68-3}
A.~Mowshowitz.
\newblock Entropy and the complexity of graphs iii: graphs with prescribed
  information content.
\newblock {\em Bulletin of Mathematical Biophysics 30}, pages 387--414, 1968.

\bibitem{Mowshowitz-68-4}
A.~Mowshowitz.
\newblock Entropy and the complexity of graphs iv: entropy measures and
  graphical structure.
\newblock {\em Bulletin of Mathematical Biophysics 30}, pages 533--546, 1968.

\bibitem{Mowshowitz-68-1}
A.~Mowshowitz.
\newblock Entropy and the complexity of the graphs i: an index of the relative
  complexity of a graph.
\newblock {\em Bulletin of Mathematical Biophysics 30}, pages 175--204, 1968.

\bibitem{Mowshowitz-12}
A.~Mowshowitz and M.~Dehmer.
\newblock Entropy and the complexity of graphs revisited.
\newblock {\em Entropy}, 14(3):559--570, 2012.

\bibitem{Naor-90}
M.~Naor.
\newblock Succinct representation of general unlabeled graphs.
\newblock {\em Discr. Appl. Math.}, 28(3):303--307, 1990.

\bibitem{palla2005uncovering}
G.~Palla, I.~Der{\'e}nyi, I.~Farkas, and T.~Vicsek.
\newblock Uncovering the overlapping community structure of complex networks in
  nature and society.
\newblock {\em nature}, 435(7043):814--818, 2005.

\bibitem{Arithmetic-01}
R.~C. Pasco.
\newblock Source coding algorithms for fast data compression.
\newblock {\em Ph.D. disssertation, Stanford Univ.}, May 1976.

\bibitem{Peshkin-07}
L.~Peshkin.
\newblock Structure induction by lossless graph compression.
\newblock In {\em Proceedings of 2007 Data Compression Conference (DCC'07)},
  pages 53--62, 2007.

\bibitem{Rashevsky-55}
N.~Rashevsky.
\newblock Life information theory and topology.
\newblock {\em Bulletin of Mathematical Biophysics 17}, pages 229--235, 1955.

\bibitem{Arithmetic-02}
J.~J. Rissanen.
\newblock Generalized kraft inequality and arithmetic coding.
\newblock {\em IBM Journal of Research and Development}, 20(3):198--203, May
  1976.

\bibitem{Sauerhoff-16}
M.~Sauerhoff.
\newblock On the entropy of models for the web graph, 2016.

\bibitem{Savari-04}
S.~A. Savari.
\newblock Compression of words over a partially commutative alphabet.
\newblock {\em IEEE Trans. Inf. Theory}, 50(7):1425--1441, 2004.

\bibitem{Shannon-48}
C.~E. Shannon.
\newblock A mathematical theory of communication.
\newblock {\em The Bell System Technical Journal}, 27:379--423, 1948.

\bibitem{Shannon-53}
C.~E. Shannon.
\newblock The lattice theory of information.
\newblock {\em Transactions of the IRE Professional Group on Information
  Theory}, 1(1):105--107, 1953.

\bibitem{Sun-07}
J.~Sun, E.~Bollt, and D.~{Ben-Avraham}.
\newblock Graph compression - save information by exploiting redundancy.
\newblock {\em J. Statist. Mechan.: Theory Exper.}, pages 06001--06001, 2008.

\bibitem{Trucco-56}
E.~Trucco.
\newblock A note on the information content of graphs.
\newblock {\em Bulletin of Mathematical Biophysics 18}, pages 129--135, 1956.

\bibitem{Turan-84}
G.~Tur\'an.
\newblock On the succinct representation of graphs.
\newblock {\em Discr. Appl. Math.}, 8(3):289--294, 1984.

\bibitem{Arithmetic-1}
F.~Willems, Y.~Shtarkov, and T.~Tjalkens.
\newblock The context-tree weighting method: basic properties.
\newblock {\em IEEE Trans. Inf. Theory}, 41(3):653--664, 1995.

\bibitem{Raymond-book}
R.~W. Yeung.
\newblock {\em Information Theory and Network Coding}.
\newblock Springer, Verlag, USA, 2008.

\bibitem{survey-01}
H.~Zenil, N.~A. Kiani, and J.~Tegnér.
\newblock A review of graph and network complexity from an algorithmic
  information perspective.
\newblock {\em Entropy}, 20(8), 2018.

\end{thebibliography}
	
\end{document}